\documentclass[a4paper, USenglish, cleveref, autoref, thm-restate]{lipics-v2021}
\hideLIPIcs
\nolinenumbers

\usepackage{color}

\usepackage{tikz}
\usetikzlibrary{positioning, fit, arrows.meta, shapes.geometric, shapes.symbols, decorations.pathreplacing, calc}
\usepackage{svg}

\newcommand{\currRound}{$c$}
\newcommand{\targetRound}{$s$}
\newcommand{\compChain}{\textit{comp-chain}}
\newcommand{\lhChain}{\textit{lh-chain}}
\newcommand{\emptyComp}{\textit{empty-comp}}

\newtheorem{assumption}{Assumption}

\title{The Finality Calculator: Analyzing and Quantifying Filecoin's Finality Guarantees}

\titlerunning{The Finality Calculator}

\author{Guy Goren}{Aptos Labs, USA}{guy.goren@aptoslabs.com}{https://orcid.org/0000-0003-2158-161X}{}
\author{Jorge M. Soares}{Finisterra Labs, USA}{jorge@finisterra.ai}{https://orcid.org/0000-0002-8528-3489}{}

\authorrunning{G. Goren and J. M. Soares}

\Copyright{Guy Goren and Jorge M. Soares} 

\ccsdesc[500]{Security and privacy~Distributed systems security}

\keywords{Blockchain, Filecoin, Stochastic Analysis, Blockchain Security, Finality}

\EventEditors{John Q. Open and Joan R. Access}
\EventNoEds{2}
\EventLongTitle{42nd Conference on Very Important Topics (CVIT 2016)}
\EventShortTitle{CVIT 2016}
\EventAcronym{CVIT}
\EventYear{2016}
\EventDate{December 24--27, 2016}
\EventLocation{Little Whinging, United Kingdom}
\EventLogo{}
\SeriesVolume{42}
\ArticleNo{23}

\begin{document}

\maketitle

\begin{abstract}
    In this paper, we analyze the finality of the Filecoin network, focusing on dynamic probabilistic guarantees of tipset permanence in the canonical chain. Our approach differs from static analyses that consider only the worst-case scenario; instead, we dynamically compute the error probability at each round using the live chain history, providing a more accurate and efficient assessment. 
    We provide a practical algorithm that only requires visibility into the blocks produced by honest participants, which can be implemented by clients or off-chain applications without any change to Filecoin's consensus mechanisms.
    We demonstrate that, under typical operating conditions, the sought-after error probability of $2^{-30}$ is achievable in approximately 30 rounds, a 30x improvement over the 900 rounds that the network currently encodes as a fixed threshold. This finding immediately expedites transactions and enhances usability of the Filecoin network, while laying the foundation for further analysis of other DAG-structured blockchains.
\end{abstract}



\maketitle

\section{Introduction} 

Filecoin is the world's largest decentralized storage network. It leverages a \textit{Proof of Space} mechanism to create an open market for storage, while reusing that same storage in its consensus mechanism. This mechanism provides a more efficient alternative to the energy-intensive Proof of Work model whilst still utilizing a real-world resource, disk space, to back the network's security\footnote{Using a real-world resource for blockchain security ties the network's integrity directly to a physical asset, enhancing its robustness and making attacks costly. This approach is deemed more secure than Proof of Stake, which relies on internal virtual assets that make it more susceptible to manipulation~\cite{ateniese2014proofs,modelingResources22}.}.

Filecoin introduces a consensus mechanism called Storage Power Consensus (SPC), which is based in part on the storage capacity provided by miners, known as Storage Providers (SPs). These providers secure the right to participate in the blockchain's consensus and to create blocks by pledging storage capacity. In return, they receive a financial reward when selected to produce a block that is then successfully added to the chain. The likelihood of a miner producing a block is directly proportional to their power, which is determined by the qualified amount of storage space they have committed and proven through specific cryptographic proofs~\cite{filProofs}.

The SPC mechanism consists of two core components. First, a Sybil-resistance mechanism ensures a reliable and verifiable map of the storage capacity each provider commits. Second, a weighted consensus protocol allows participants with varying storage capacities to contribute to the block production process. This protocol employs a heaviest-chain strategy, similar in spirit to the longest-chain protocols used in other blockchains like Bitcoin~\cite{nakamoto2008bitcoin}, but adapted for Filecoin's structure: instead of a single chain of blocks, Filecoin organizes blocks into \textit{tipsets}, collections of blocks that share the same round and parent. This structure, exemplified in \cref{fig:tipset-diagram}, allows multiple valid blocks submitted in the same round to be grouped together, increasing network throughput and efficiency. The consensus process continually favors the heaviest tipset chain, which is typically the one with the most cumulative storage power backing it.

\begin{figure}[htb]
    \centering
    \resizebox{.6\linewidth}{!}{\begin{tikzpicture}
    \tikzset{
        block/.style = {draw, rectangle, minimum size=0.4cm, align=center, fill=gray!30},
        orphan/.style = {draw, rectangle, minimum size=0.4cm, align=center, fill=red!50},
        tipset/.style = {draw, ellipse, inner sep=2pt, outer sep=0pt, fit=#1, very thick},
        line/.style = {draw, thick, ->},
        redline/.style = {draw, thick, dashed, ->, red},
        tipsetalt/.style = {draw, ellipse, inner sep=5pt, outer sep=0pt, fit=#1, very thick, dashed, draw=red},
        line/.style = {draw, thick, ->}        
    }
    \node[block] (g) {G};

    \node[block, right=2.2cm of g] (a3) {1c};
    \node[block, above=0.2cm of a3] (a2) {1b};
    \node[block, above=0.2cm of a2] (a1) {1a};
    \node[block, below=0.2cm of a3] (a4) {1d};
    \node[block, below=0.2cm of a4] (a5) {1e};

    \node[block, right=2.2cm of a1] (b1) {2a};
    \node[block, below=0.2cm of b1] (b2) {2b};
    \node[block, below=0.2cm of b2] (b3) {2c};
    \node[block, below=0.2cm of b3] (b4) {2d};
    \node[orphan, below=0.2cm of b4] (b5) {2e};

    \node[block, right=2.2cm of b1] (c1) {3a};
    \node[block, below=0.2cm of c1] (c2) {3b};
    \node[block, below=0.2cm of c2] (c3) {3c};
    \node[block, below=0.2cm of c3] (c4) {3d};
    \node[orphan, below=0.2cm of c4] (c5) {3e};

    \node[block, right=2.2cm of c1] (d1) {4a};
    \node[block, below=0.2cm of d1] (d2) {4b};
    \node[block, below=0.2cm of d2] (d3) {4c};
    \node[block, below=0.2cm of d3] (d4) {4d};
    \node[block, below=0.2cm of d4] (d5) {4e};
    
    \node[tipset={(a1) (a2) (a3) (a4) (a5)}, label=above:T1] {};
    \node[tipset={(b1) (b2) (b3) (b4)}, label={[label distance=0.2cm]above:T2}] {};
    \node[tipsetalt={(b1) (b2) (b3) (b4) (b5)}, label={[text=red]below:T2'}] {};
    \node[tipset={(c1) (c2) (c3) (c4)}, label={[label distance=0.2cm]above:T3}] {};
    \node[tipsetalt={(c5)}, label={[text=red]below:T3'}] {};
    \node[tipset={(d1) (d2) (d3) (d4) (d5)}, label=above:T4] {};

    \foreach \from in {a1, a2, a3, a4, a5}
        \draw[line] (\from) -- (g);

    \foreach \from in {b1, b2, b3, b4, b5}
        \foreach \to in {a1, a2, a3, a4, a5}
            \draw[line] (\from) -- (\to);

    \foreach \from in {c1, c2, c3, c4}
        \foreach \to in {b1, b2, b3, b4}
            \draw[line] (\from) -- (\to);
    \foreach \to in {b1, b2, b3, b4, b5}
        \draw[redline] (c5) -- (\to);
    
    \foreach \from in {d1, d2, d3, d4, d5}
        \foreach \to in {c1, c2, c3, c4}
            \draw[line] (\from) -- (\to);
\end{tikzpicture}}
    \caption{Example of a possible tipset chain in Filecoin's block DAG. 
Block~$G$ is the genesis block. In round 1, five blocks are produced, all pointing to~$G$. Round 2 sees five more blocks, each pointing to all blocks from the first round (Tipset T1). In round 3, five blocks are again produced; four point to T2, which contains four blocks, while one (block~$3e$) points to T2', which contains all five blocks. This situation can arise if block~$2e$ is kept private by its producer and shared only with the producer of block~$3e$. Despite block~$3e$ pointing to all blocks of round 2, blocks in round 4 cannot combine it with the rest of round 3 blocks due to different parent tipsets. Additionally, although T2' is heavier than T2, the chain \textit{T1 $\leftarrow$ T2' $\leftarrow$ T3'} is lighter (11 blocks) than the chain \textit{T1 $\leftarrow$ T2 $\leftarrow$ T3} (14 blocks), leading correct nodes to extend the latter.
}
    \label{fig:tipset-diagram}
\end{figure}

Filecoin offers only asymptotic guarantees of correctness. The work in~\cite{wang2023} demonstrates that Filecoin's consensus is safe on an infinite time horizon, provided that the fraction of Byzantine power $f$ satisfies $f\cdot e < 1-\exp\{-(1-f) e\}$, where $e$ represents the expected number of blocks produced per round. With $e~=~5$ in the current Filecoin configuration, this criterion implies that Filecoin can tolerate the presence of a 20\% adversary, given infinite time for system stabilization (i.e. given an infinitely deep block.) However, such theoretical conditions are impractical for real-world applications that require a definitive measure of security. To underscore the need, note that Bitcoin's single-chain-of-blocks model has been extensively analyzed through numerous studies addressing this critical security aspect (e.g.~\cite{gervais2016security,eyal2018majority,sompolinsky2016bitcoin,dembo2020everything,gazi2022practical,guo2022bitcoin}). Filecoin, in contrast, relies on informal arguments for finality~\cite{spec} and a heuristic finality threshold of 900 rounds~\cite{specs-actors}. This approach is notably inadequate for a system whose value reaches into the billions of dollars.

In this work, we introduce an algorithmic approach to quantitatively measure the finality of Filecoin, that is, the probabilistic guarantee that a specific tipset will remain part of the canonical chain and the transactions it contains will not be reversed. Our analysis, which involves distributed computing and stochastic reasoning, establishes dynamic upper bounds for error probabilities --- namely, the likelihood of a chain reorganization overriding a given tipset. Unlike static analyses that only consider worst-case scenarios, our evaluation considers the particular chain structure observed in the rounds surrounding each tipset. This approach allows for more realistic assessments by combining potential future worst-case scenarios with actual past and present conditions.

\paragraph*{Contributions}
The paper delivers several key contributions:
\begin{itemize}
\item It presents the first rigorous quantitative analysis of Filecoin's finality.
\item It introduces a local algorithm, the \textit{Finality Calculator}, which calculates an upper bound on the probability of reorganization based on the recently observed history. Unlike the static confirmation rules frequently used in blockchains, the calculator provides a dynamic measure of error probability that adjusts to real-time chain conditions, decreasing when the relevant history packs more blocks per round and increasing otherwise. This tool operates independently in or alongside a Filecoin node without requiring consensus modifications and does not necessitate additional communication among Filecoin peers. The Finality Calculator has been implemented natively in the most popular Filecoin clients.
\item It provides extensive simulation and real-world evaluation results that demonstrate the effectiveness of the \textit{Finality Calculator}. Our results indicate significant improvements in client-side finality, reducing confirmation times by more than 30x. In most scenarios, the desired $2^{-30}$ upper bound on the probability of reorganization is achieved in fewer than 30 rounds, greatly improving on the previously assumed need for a 900-round wait.
\end{itemize}

\paragraph*{Organization}
The rest of the paper is organized as follows. \Cref{sec:model} outlines the fundamentals of the Filecoin network and the assumptions made in our analysis. \Cref{sec:analysis} then details the analysis from a full node point-of-view, providing a full derivation of the base calculator. \Cref{sec:smart-contract} extends the analysis to the challenging case of an on-chain smart contract with limited visibility. \Cref{sec:implementation} briefly describes our Python prototype implementation, which is used in \cref{sec:simulations,sec:evaluation}. Finally, \cref{sec:conclusions} summarizes this work and lists some directions for future research.

\section{Model}
\label{sec:model}
In Filecoin, participants known as Storage Providers (SPs) pledge storage in exchange for the right to produce blocks. This pledge is quantified in discrete units called sectors, which are then weighed by a Sector Quality Multiplier and provide a commensurate number of voting shares to the controlling SP.
Guaranteeing the correctness of this storage pledging part entails complex cryptographic and incentive mechanisms, which are not in the scope of this work. Instead, we only focus on the parts relevant for our analysis.
We consider a static system comprised of~$N$ sectors with uniform multiplier. A static adversary can corrupt up to a fixed fraction~$f$ of the voting power at the beginning of the protocol, effectively controlling at most~$f\cdot N$ of the total pledged storage. To counter an adaptive adversary capable of dynamically altering their targets -- uncorrupting some nodes while corrupting others during protocol execution (i.e. at the start of each round) -- methods such as key evolving signature schemes~\cite{david2018ouroboros} or checkpointing~\cite{azouvi2022pikachu} could be employed, but this is left for future work.

\paragraph*{Timing}
Filecoin's Storage Power Consensus (SPC) fundamentally depends on synchronous storage proofs and operates within a 30-second interval synchronized by drand~\cite{drand}. For our analysis, we adopt the conventional round-based synchronous model. The Filecoin documentation refers to these intervals as \textit{epochs}~\cite{spec} instead of the more conventional \textit{rounds}, but we choose to use the latter in this work.

\paragraph*{Communication}
The Filecoin network utilizes libp2p~\cite{libp2p} and GossipSub~\cite{vyzovitis2020gossipsub} for peer-to-peer (P2P) and broadcast communication. Our model assumes that the P2P communication network is reliable, meaning that messages sent by a correct participant are guaranteed to reach their destination within the same round.
Additionally, using digital signatures, the integrity of messages is preserved, preventing the adversary from fabricating messages originating from honest participants.
The SPC protocol employs a symmetric consensus mechanism, which avoids the bottlenecks typical of leader-based consensus protocols. All correct Filecoin nodes broadcast messages using a consistent broadcast protocol~\cite{cbroadcast} defined as follows:

\begin{definition}[Consistent broadcast]\label{def:cbcast}
    A protocol for consistent broadcast satisfies:
    \begin{itemize}
        \item[] \textbf{Validity}: If a correct sender $p_s$ consistently broadcasts a message $\textit{msg}$, then all correct processes eventually consistently deliver $\textit{msg}$.
        \item[] \textbf{Consistency}: If a correct process consistently delivers a message $\textit{msg}$ and another correct process consistently delivers a message $\textit{msg}'$, then $\textit{msg}=\textit{msg}'$.
        \item[] \textbf{Integrity}: Each correct process consistently delivers $\textit{msg}$ at most once. Moreover, if the sender $p_s$ is correct, then $\textit{msg}$ was previously consistently broadcast by $p_s$.
    \end{itemize}
\end{definition}

\paragraph*{Randomness}
A random beacon is sourced from drand%
\footnote{The source of randomness in non-PoW blockchains is of considerable importance. The interested reader is advised to see~\cite{drand} for more details.}
and assumed to be both unpredictable and unbiasable by participants in the consensus protocol. The beacon is used to generate seeds for each round, and block producers are selected using a Verifiable Random Function (VRF)~\cite{micali1999verifiable} that assigns a random verifiable score to each participant. Participants whose scores fall below a publicly known threshold are deemed eligible to produce a block in that round. This threshold is adjusted to achieve an expected production (tipset size) of $e = 5$ blocks per round. 
Since each participant independently executes the VRF, the distribution of winners per round can be modeled as a binomial distribution with $N$ trials and a success probability $p=\frac{e}{N}$. Thus, if $X_f[r]$ and $X_h[r]$ represent random variables for the number of blocks won by the adversary and the blocks won by honest participants in round~$r$, respectively, their probabilities are given by:
\begin{align}
    &P(X_f=k) = \textit{Bin}(k;N\cdot f, \frac{e}{N})\label{eq:malBin}\\
    &P(X_h=k) = \textit{Bin}(k;N\cdot (1-f), \frac{e}{N}).\label{eq:honestBin}
\end{align}

Given the large number of SPs ($N>3000$) and the fact that the expected number of blocks per round $e \ll N$, the Poisson approximation of the binomial distribution is appropriate and will be utilized throughout the paper.

\paragraph*{Chain structure}
In Filecoin, a \textit{tipset} is a set of one or more blocks at the same blockchain round that all reference the same set of parent blocks (the parent tipset). Each block within a tipset is treated as a valid part of the blockchain without leading to forks. Duplicated or conflicting transactions within a tipset are excluded based on block producer priority.%
\footnote{A block is valid if the VRF score of its producer is below the required threshold. The priority in transaction inclusion follows these VRF scores, such that out of two duplicated/conflicting transactions, the one belonging to the block with the lower VRF score is included while the other one is excluded. Conflicting transactions within the same block invalidate the entire block.}
This allows multiple blocks to be added to the chain simultaneously, reflecting the concurrent mining efforts across different nodes. \Cref{fig:tipset-diagram} depicts an example of a tipset chain.

\paragraph*{Chain selection}
As in other longest-chain protocols, blocks might be orphaned from the canonical chain if they belong to the lesser side of a fork. Filecoin's fork choice rule states that correct nodes choose to extend the heaviest chain, as defined by a combination of the number of blocks and the storage committed. For simplicity, we conflate this to only the number of blocks.

\section{Analysis: Node view}
\label{sec:analysis}
In this section, we analyze a transaction's finality, as observed by a correct Filecoin node. Specifically, we focus on how a node can estimate the worst-case probability that a past tipset might be reverted based on a node's local block history.

Let~\currRound{} denote the current round and~\targetRound{} the target round for calculating tipset finality. The node calculating finality, denoted $n_i$, bases its calculations on its local history. \lhChain{} denotes the local heaviest chain as observed by~$n_i$ at round~\currRound. 
Similarly, we denote by \compChain$[r]$ the heaviest competitor chain with blocks up to (including) round~$r$, as is observed by~$n_i$ at round~\currRound. In particular, \compChain$[r]$ may be the same as \compChain$[r-1]$ if it does not include blocks produced at round~$r$.
For any given \compChain, We define~$G$ to be the observed (good) advantage that the \lhChain{} holds over \compChain. For example, if \compChain{} contains 4 blocks and \lhChain{} contains 17 blocks, then $G=13$.

\begin{definition}[Finality]\label{def:finality}
    Let $tx$ be a transaction included in~\lhChain. The finality measure of $tx$ is $1-\epsilon(tx)$, where the error probability, $\epsilon(tx)$, is the probability that a reorganization occurs such that the block including $tx$ will be removed from \lhChain.
\end{definition}

Our analysis is divided into three distinct time spans defined by the rounds \targetRound{} and \currRound, for which we provide supporting illustrations in \cref{app:diagrams}: 
\begin{description}
    \item[Distant past] 
    The random variable $L$ quantifies the adversarial lead at round~$s$, representing the excess number of blocks produced by the adversary over those observed in the \lhChain{} up to round~$s$. $L$~is non-negative and equals zero when adversarial chains aren't heavier than the \lhChain. When $L\geq G$, a safety violation is possible.
    \item[Recent past] 
    The random variable $B$ describes the number of blocks produced by the adversary between round~\targetRound{} and current round~\currRound. When $L + B \geq G$, a safety violation is possible.
    \item[Future] 
    The random variable $M$ relates to the (unobserved) future beyond current round \currRound. It describes the number of blocks expected to be produced by the adversary minus the number of blocks produced by honest validators when slowed by the adversary. When $L+B+M \geq G$, a safety violation is possible.
\end{description}

We now prove two lemmas which are significant for our analysis. Colloquially, they establish that it suffices to examine malicious only extensions to all \compChain$[r]$ for $r\in[s,c]$.
\begin{lemma} \label{lem:visibility}
Let $b_h$ be a block produced by an honest node at round $r$. Then the chain ending at the tipset to which~$b_h$ is pointing is known to all honest nodes by round $r+1$.
\end{lemma}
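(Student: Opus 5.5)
The argument rests on two facts from \cref{sec:model}: an honest node only extends chains it has actually received and validated, and its own messages are delivered to every correct participant within the round they are sent. Fix an honest producer $p$ of a block $b_h$ at round $r$, and let $T$ be the parent tipset referenced by $b_h$. Let $C$ be the chain ending at $T$, i.e.\ $T$ together with all its ancestor tipsets back to the genesis block.

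First, I will show that $p$ itself knows every block of $C$ by the time it broadcasts $b_h$ in round $r$. Following the chain selection rule, $p$ extends the heaviest chain in its local view. To assemble $b_h$, execute its transactions, and check its validity, $p$ must have received and validated all blocks of $T$. By induction along the parent pointers, $p$ must also have every ancestor block, since a tipset cannot be validated without its parent tipset. Hence $C$ lies entirely in $p$'s local history at round $r$.

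Second, I will argue that $p$ makes all of $C$ available to the other honest nodes. Being honest, $p$ broadcasts $b_h$ using consistent broadcast (\cref{def:cbcast}) and, in doing so, relays or serves the blocks it has validated, including those in $C$ (as GossipSub propagation does in practice). By the reliable communication assumption, messages sent by a correct participant in round $r$ reach every correct node within that same round. By Validity and Consistency, every correct node delivers the same $b_h$ and the same ancestor blocks. Any block of $C$ that originally came from the adversary, for instance one withheld and released only to $p$, as with block $2e$ in \cref{fig:tipset-diagram}, is therefore forwarded by $p$ and can no longer be hidden from the other honest nodes. Digital signatures ensure the forwarded blocks cannot be altered in transit, so every honest node reconstructs exactly the chain $C$. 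It follows that by round $r+1$ at the latest, all honest nodes know $C$.

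The main obstacle is the second step, specifically adversarial blocks in $C$ that the adversary revealed only to $p$. The model's reliability guarantee covers messages \emph{sent by} correct participants, so the proof has to make explicit that an honest node's dissemination of $b_h$ includes, or triggers on request, the transfer of its validated ancestors. I would state this as part of honest behavior: a correct node forwards every block it validates, or answers fetch requests for the parents of blocks it produced. The $+1$ slack in the lemma absorbs the one extra delivery hop that may be needed when a node receiving $b_h$ must fetch missing ancestors from $p$ before it can validate.
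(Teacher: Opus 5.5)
Your proposal is correct and follows the paper's argument: the honest producer broadcasts $b_h$, synchrony delivers it to every honest node by round $r+1$, and that delivery carries the data needed to verify $b_h$, including the chain of tipsets it points to. The paper states this last point in one sentence as part of honest broadcast, while you spell out the honest-behavior assumption it depends on, namely that the producer forwards or serves its validated ancestors.
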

\begin{proof}
Let $h$ denote the creator of block $b_h$. Since $h$ is honest, it broadcasts~$b_h$ to all. The synchrony assumption guarantees the reception of~$b_h$ by round $r+1$. This includes the relevant data to verify~$b_h$, such as the chain of tipsets leading to it.
\end{proof}

\begin{lemma}\label{lem:bestComp}
Let \compChain$[r]$ be a best competitor chain as defined above, and let \textit{ext}-\compChain$[r]$ be an extension of it in the interval $[r,c)$. Let $r'\in[r,c)$ be the latest round at which \textit{ext}-\compChain$[r]$ contains a block produced by an honest node. Then, \compChain$[r']$ contains at least as many blocks as \textit{ext}-\compChain$[r]$ up to round~$r'$.
\end{lemma}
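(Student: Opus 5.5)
The plan is to use \cref{lem:visibility} to show that, up to round $r'$, the prefix of \textit{ext}-\compChain$[r]$ is a chain the calculating node $n_i$ has already observed by round \currRound. Once that is established, the maximality built into the definition of \compChain$[r']$ finishes the argument.

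\textbf{Step 1.} Let $b_h$ be the honest block in \textit{ext}-\compChain$[r]$ produced at round $r'$, which exists by the choice of $r'$. By \cref{lem:visibility}, the chain ending at the tipset $b_h$ points to is known to all honest nodes by round $r'+1 \le c$, and in particular to $n_i$. Moreover, $b_h$ itself is broadcast by its honest producer. By synchrony it reaches $n_i$ by round $r'+1$, carrying the data needed to verify it.

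\textbf{Step 2.} Conclude that $n_i$ observes the whole prefix of \textit{ext}-\compChain$[r]$ through round $r'$ by round \currRound. This prefix consists of the ancestry of $b_h$ together with the round-$r'$ tipset containing $b_h$. The one subtlety is that the round-$r'$ tipset of the extension may contain other, possibly adversarial, blocks that are not ancestors of anything $n_i$ sees. To handle this, I would compare against the chain consisting of $b_h$'s ancestry plus at least the blocks of that tipset that $n_i$ has seen. If the statement is intended to count only observed blocks, this suffices. Otherwise I would argue that any withheld same-round siblings belong to the malicious-only part. That part is accounted for separately by $L$ and $B$, so it can be reassigned to the later adversarial extension.

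\textbf{Step 3.} By definition, \compChain$[r']$ is the heaviest competitor chain with blocks up to round $r'$ as observed by $n_i$ at round \currRound. The prefix from Step 2 is an observed competitor chain ending at round $r'$. It is a competitor because it extends \compChain$[r]$ and hence diverges from \lhChain{} in the same way. Therefore \compChain$[r']$ has at least as many blocks as this prefix, which gives the claim.

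\textbf{Main obstacle.} I expect Step 2 to be the hardest part: justifying that the entire prefix up to round $r'$, including the whole round-$r'$ tipset, is visible to $n_i$, and not merely $b_h$'s parents. I would first try to resolve this by noting that $b_h$ and its ancestors are all observed, and by interpreting the block count up to round $r'$ as counting the chain through $b_h$. Any unseen round-$r'$ siblings would then be treated as malicious blocks in the interval after $r'$ in the subsequent analysis.
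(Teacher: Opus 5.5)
Your argument follows the paper's proof: \cref{lem:visibility} together with the honest broadcast of $b_h$ makes the chain through $b_h$ visible to $n_i$ by round $r'+1\le c$, and the maximality of \compChain$[r']$ then gives the bound. The subtlety you raise in Step~2 is genuine and the paper's proof skips it: it simply asserts that ``the chain ending at a tipset containing $b_h$'' is visible, which \cref{lem:visibility} does not give for withheld adversarial siblings of $b_h$, so the statement holds literally only for the observed part of the round-$r'$ tipset, and your fix of charging withheld siblings to the adversary's private blocks is the right one---just note that those blocks are produced in round $r'$ itself, not after it, so the malicious-only extension must be allowed to include round $r'$.
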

\begin{proof}
Let $b_h$ be a block produced by an honest node in round~$r'$, which is contained in \textit{ext}-\compChain$[r]$.
By \cref{lem:visibility}, block $b_h$ is visible to all honest nodes at time~\currRound. Consequently, the chain ending at a tipset containing $b_h$ is visible to all honest nodes, including $n_i$.
In particular, \textit{ext}-\compChain$[r]$ up to round~$r'$ is visible to~$n_i$. Since \compChain$[r']$ is the heaviest chain visible to~$n_i$ up to round~$r'$, it must contain at least as many blocks as \textit{ext}-\compChain$[r]$ up to round~$r'$.
\end{proof}

\begin{corollary}
The best strategy for an adversary is to extend one of the \compChain s with malicious blocks only.
\end{corollary}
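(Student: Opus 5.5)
The plan is to show that any attacking chain can be rewritten, without loss of weight, as some \compChain$[r']$ with $r'\in[s,c]$ followed by a purely malicious suffix. Fix an arbitrary adversarial chain $C$ that competes with \lhChain{} at round~\currRound. Since $C$ forks from \lhChain{} before or at round~\targetRound, its prefix up to round~\targetRound{} is an extension of some competitor, and in particular $C$ is an extension \textit{ext}-\compChain$[s]$ of \compChain$[s]$ in the sense of \cref{lem:bestComp}. The goal is to produce a chain $C'$ with at least as many blocks as $C$ such that $C'$ consists of \compChain$[r']$ for some $r'\in[s,c]$ followed only by adversarial blocks in rounds $(r',c)$. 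Because the adversary's success condition (a reorganization overtaking \lhChain) depends only on block count, replacing $C$ by $C'$ never hurts the adversary. This justifies restricting attention to malicious-only extensions of the \compChain s.

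First I split into two cases. If $C$ contains no honestly produced block in $[s,c)$, then $C$ is already a malicious-only extension of \compChain$[s]$ (up to replacing its prefix by the heavier \compChain$[s]$, which by definition has at least as many blocks through round $s$), and we are done. Otherwise, let $r'$ be the latest round in $[s,c)$ at which $C$ contains an honest block. Applying \cref{lem:bestComp} with $r=s$, \compChain$[r']$ has at least as many blocks as $C$ up to round $r'$. After $r'$, every block of $C$ is adversarial by maximality of $r'$.

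Next I form $C'$ by keeping \compChain$[r']$ and appending the adversarial blocks of $C$ from rounds after $r'$. I then check that the count of $C'$ is at least that of $C$, which follows because the prefix bound comes from the lemma and the suffix counts are equal. The adversary can produce these suffix blocks on top of \compChain$[r']$ because its eligibility (VRF winning) in a round does not depend on the parent tipset. This is the subtle point: it relies on the drand seed being chain-independent, as in the model. It also relies on the convention that tipset-parent structure is absorbed into counting blocks.

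The main obstacle is this re-parenting step. Adversarial blocks in $C$ after $r'$ point at specific parents, and moving them onto \compChain$[r']$ requires that the adversary could equally have signed blocks with those parents. I would argue this directly from the model, since winning is per round and per participant. A second delicacy is making the case analysis consistent with \compChain$[r]$ being defined as a heaviest chain ``up to round $r$''. I would handle this by noting that any honest block in $C$ is visible to $n_i$ via \cref{lem:visibility}, so the choice of $r'$ is well defined from $n_i$'s view.
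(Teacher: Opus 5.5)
Your second case is the paper's argument, made more explicit. You pick the latest round $r'$ at which the attacking chain $C$ contains an honest block, use \cref{lem:visibility,lem:bestComp} to dominate $C$ through $r'$ by \compChain$[r']$, and note that everything after $r'$ is malicious. Your explicit construction of $C'$ adds something the paper's short proof leaves implicit. So does your remark that eligibility is chain-independent in the model, which is what allows the post-$r'$ malicious blocks to be re-parented onto \compChain$[r']$.

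Your first case, however, rests on a false claim. \compChain$[s]$ is the heaviest competitor \emph{observed by} $n_i$, so it need not have ``at least as many blocks through round $s$'' as $C$. In that case all of $C$'s blocks from round $s$ on are malicious and may be private, and $C$ may also carry private malicious blocks from before $s$. \cref{lem:visibility} only makes $C$ visible up to its last honest block, so no observed chain is forced to dominate $C$ through round $s$. If your claim held, then in the \emptyComp{} case the adversary could never be ahead at round $s$, and the distant-past lead $L$ of \cref{sec:distantPast} would be identically zero. The scenario of \cref{fig:distantPast}, a private chain forked before $s$ that leads by $L=3$, is a direct counterexample.

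The same conflation appears in your opening sentence, which asserts that $C$ extends \compChain$[s]$. It is harmless in your second case only because the proof of \cref{lem:bestComp} uses nothing beyond visibility of the prefix through $r'$. The repair is to drop the restriction $r'\in[s,c)$: let $r'$ be the latest round at which $C$ contains an honest block, or its fork point from \lhChain{} if there is none. Then $C$ is a visible chain through $r'$ followed by a malicious-only suffix, and the visible part is dominated by the heaviest observed chain through $r'$. When $r'<s$, the portion of the suffix before round $s$ is exactly the lead that $L$ accounts for. This merges your two cases into one and recovers the paper's statement.
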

\begin{proof}
    By \cref{lem:visibility} any extension containing blocks by honest nodes is visible to $n_i$, and by \cref{lem:bestComp} such an extension cannot contain more blocks than the corresponding \compChain. Therefore, an adversary that uses its full power to add private blocks to a \compChain{} has the best chance of overtaking the \lhChain{} of $n_i$.
\end{proof}

We can now begin to derive the probabilities for each of the defined time spans. We are interested in the (visible to~$n_i$) \compChain{} that results with the worst (highest) error probability.
For ease of exposition, we detail the calculations for the \compChain$[s]$ that contains no visible orphan blocks, which we denote by \emptyComp.
The calculations for scenarios involving different \compChain s adhere to the same principles but entail additional complexity.
Moreover, note that for a \compChain$[r]$ to have better chances than \emptyComp{} at producing an error, \compChain$[r]$ must have a comparable weight to \emptyComp{} extended with $f\cdot e\cdot (r-s)$ blocks. That is, \compChain$[r]$ should be heavier than the potential private chain the adversary might have created using its full power.%
\footnote{To the best of our knowledge, no \compChain{} longer that 5 epochs had ever achieved that in the history of Filecoin. Indicating that \emptyComp{} is the most relevant in practice.}

\subsection{Span 1: Distant past}\label{sec:distantPast}

Recall that~\targetRound{} is the round for which the finality probability is being evaluated, and~\currRound{} is the current round ($c>s$). The random variable $L$ describes the adversarial lead gained from the last final tipset (e.g. the tipset at round $c-900$) until round~\targetRound. With no observed competition, \Cref{lem:bestComp} establishes that $L$ behaves like a biased random walk with a random step size whenever $L>0$ but does not decrease when $L=0$. For each round $i \in [c-900+1,s]$, the step expectation is $f \cdot e - chain[i]$, where $chain[i]$ is the number of blocks at the tipset of the \lhChain{} that was constructed at round $i$ and $f \cdot e$ is the expected number of adversarial blocks at a round (i.i.d).

The fact that $L$ is non-negative changes the analysis somewhat since we cannot use the classic random walk model. Instead, to account for the distribution of $L$, we can look at a reverse process $L'$ that starts at the tipset of interest of round~\targetRound{} and moves backward in time. Since the adversarial lead can be reached in different number of rounds, the process needs to account for all these possibilities. Specifically, for reaching the lead~$k$ in an attack that lasts~$i$ rounds, we have the random variable~$L_i'$ that follows a binomial distribution
\begin{align}\label{eq:L}
    &L_i' \sim \texttt{Bin}\left(\sum_{j=s-i}^{s} f \cdot n, \frac{e}{n}\right) \approx \texttt{Pois}\left(\sum_{j=s-i}^{s} f \cdot e\right).
\end{align}

For the lead of that attack to be~$k$, it needs to overcome the \lhChain{} during these rounds by~$k$ blocks, that is, it needs to have an advantage of~$k$ over the accumulated blocks in~\lhChain, which we denote by~$k_i$:
\begin{align}
    &k_i = k+\sum_{j=s-i}^{s} chain[j].
\end{align}

It follows that
\begin{align}
    &P(L = k) = P(L' = k) = \max{\left\lbrace P(L_1' =k_1),P(L_2' =k_2),... \right\rbrace}.
\end{align}

\subsection{Span 2: Recent past}\label{sec:recentPast}
The random variable $B$ represents the blocks mined by the adversary between round \targetRound{} and \currRound. \Cref{lem:bestComp} implies that $B$~upper bounds the number of blocks that the adversary could add during these rounds to its hidden chain. Note that $B$ is independent of $L$ and follows a simple binomial distribution, as explained before (see \cref{eq:malBin}). For ease of computation, we again approximate the binomial distribution by a Poisson one: 
\begin{equation}\label{eq:B}
    B \sim \texttt{Bin}\left(\sum_{i=s+1}^{i=c} f \cdot n, \frac{e}{n}\right) \approx \texttt{Pois}\left(\sum_{i=s+1}^{i=c} f \cdot e\right).
\end{equation}

\subsection{Span 3: Future}
\label{sec:future}

The future production of honest blocks follows the binomial distribution as described in \cref{eq:honestBin}. However, it needs to consider that, in cases where the adversary succeeds at splitting the honest chain, not all honest blocks per round would be added to the same tipset. To do so, the adversary must provide parent tipsets that are no worse than the currently available \lhChain{} at correct nodes. We calculate a lower bound on the growth rate of the public chain (shared prefix of \lhChain{} at correct nodes) based on the following two assumptions:
\begin{assumption}
    The adversary can optimally use blocks from round~$i$ to split the blocks created for round~$i+1$.
\end{assumption}
Note that, although the system is probabilistic, this assumption gives the adversary the ability to split it in a deterministically optimal manner. In practice, splitting the network power requires delicate coordination and is extremely difficult. Thus, this assumption considerably strengthens the adversary.
\begin{assumption}
   The adversary uses blocks only from round~$i$ when splitting the blocks created at round $i+1$.
\end{assumption}
This assumption slightly restricts the adversary's capabilities. However, given that the relevance of older blocks diminishes rapidly, the impact of this assumption is minute. We conjecture that the lower bound we establish under these assumptions will also hold without them, as they appear to favor the adversary more than they limit it. However, this claim has not been proven.

We can now rigorously establish the aforementioned lower bound. Recall that Filecoin's underlying broadcast layer only satisfies the properties of consistent broadcast (\cref{def:cbcast}) and does not guarantee reliable broadcast. Thus, the adversary can use blocks from round~$i-1$, denoted by $B[i-1]$, to split the honest chain growth at round~$i$ into $2^{B[i-1]}$ fragments. Denoting the number of honest blocks in round~$i$ as $H[i]$, we get that during round~$i$, the honest chain grows by at least
\begin{equation}
    Z[i] = \min{\left\lbrace \frac{H[i] + B[i-1]}{2^{B[i-1]}}, H[i] \right\rbrace}.
\end{equation}

We therefore have that, at step $i$, the random variable $M$ changes according to the sum $B[i]-Z[i]$. To simplify the calculations, we replace the random variable $Z$ by $Z'$:
\begin{align}
    \begin{split}
        Z'[i] &\sim \texttt{Pois}(E[Z[i]]) \\
        &= \texttt{Pois}\left(E \left[\min{\left\lbrace\frac{H[i]+B[i-1]}{2^{B[i-1]}},H[i]\right\rbrace}\right]\right) \\
        &= \texttt{Pois}\left(\Pr(H[i]>0) \cdot E \left[ \frac{H[i]+B[i-1]}{2^{B[i-1]}}\right]\right).
    \end{split}    
\end{align}

We then define the random process $M_i$ recursively: 
\begin{equation}
    \begin{split}
        M_i &\triangleq M_{i-1} + B[i] -Z'[i], \quad M_0=0 \\
        M_i &= \sum_{j=c+1}^{i}B[j]-Z'[j] = \sum_{j=c+1}^{i}B[j]-\sum_{j=c+1}^{i}Z'[j].
    \end{split}
\end{equation}

For each~$n$ such that $i = c+n$, we have that $\sum_{j=c+1}^{i}B[j] \sim \texttt{Pois}(n \cdot e \cdot f)$ and $\sum_{j=c+1}^{n}Z' \sim \texttt{Pois}(n \cdot E[Z])$.  As the difference between two independent Poisson-distributed random variables, each $M_i$ follows a Skellam distribution~\cite{skellam1946frequency}. Thus, we conclude that:
\begin{align}
    &M_i \sim \texttt{Skellam}(n \cdot e \cdot f, n \cdot E[Z])\\
    &Pr(M=k) = \max{\lbrace Pr(M_1=k), Pr(M_2=k), ...\rbrace}.
\end{align}

\subsection{Error probability}

For an observed good addition $G=k$, the safety violation event  happens only if one of the following three (mutually exclusive) events occurs:

\begin{enumerate}
    \item $L \geq k$
    \item $L < k$ but $L+B \geq k$
    \item $L + B < k$ but $L+B+M \geq k$
\end{enumerate}

Knowing that
\begin{equation}
    P(L+B \ge k \cap L < k) =  \sum_{l=0}^{k-1} P(L=l) \cdot P(B+l \ge k),
\end{equation}
and that
\begin{equation}
    P(L+B+M \ge k \cap L+B<k) = 
    \sum_{l=0}^{k-1}\sum_{b=0}^{k-l-1} P(L=l) \cdot P (B=b) \cdot P(M \ge k - l -b),
\end{equation}
we get 
\begin{equation}
    \begin{aligned}
        &P(\textit{error}) = P(L \ge k) + P(L+B \ge k \cap L < k)
        + P(L+B+M \ge k \cap L+B<k)\\
        &= P(L \ge k) + \sum_{l=0}^{k-1} P(L=l) \cdot  \left(P(B+l \ge k) + \sum_{b=0}^{k-l-1} P(B=b) \cdot P(M \ge k - l - b)\right).
    \end{aligned}
\end{equation}

\section{Analysis: On-chain view}
\label{sec:smart-contract}

Our previous analysis relies on the fact that the node has visibility over all chains that end up with an honest block, as expressed in \cref{lem:visibility,lem:bestComp}.
In this section, we explore the derivation of lower bounds for a finality determination made by a smart contract (\textit{actor} in Filecoin's jargon) that runs within the blockchain and has no access to observations outside that chain. 

As mentioned, a smart contract has no knowledge of honest blocks on other forks, and must therefore estimate finality solely based on the chain in which it runs. We incorporate this difference by considering the possible honest blocks outside the chain as helping the adversarial chain. This affects the calculation of $B$ and $L$, which are now bounded from above as set out below.

These estimates are conservative due to the strict adherence to the correctness of the stochastic steps in our analysis. Achieving tighter and more precise lower bounds would require additional proofs to justify derivations that do not generally hold but may be valid in our specific environment. Addressing these complexities and extending our model accordingly is beyond the scope of this paper and a direction for future research.

\subsection{Step 1: Produced blocks}

Take the total number of blocks produced in round $i$ to be
\begin{align}
    &T[i] = \texttt{Bin}\left(N, \frac{e}{N}\right) \approx \texttt{Pois}\left(e\right)\\
    &T[j,m]\triangleq \sum_{i=j}^{m} T[i] \sim \texttt{Bin}((m-j+1) \cdot N, \frac{e}{N}) \approx \texttt{Pois}\left((m-j+1) \cdot e\right),
\end{align}
and consider the role of the \lhChain{} to now be played by only the observable \textit{chain}. Since we cannot see more blocks than those produced, $T[j,m] \ge chain[j,m] = \sum_{i=j}^{m} chain[i]$. Through the rest of the section and in the interest of brevity, we abuse the notation and overload $T$ to refer to $T[j,m]$ and $chain$ to refer to $chain[j,m]$, following the same logic for related variables. It follows that
\begin{equation}
    \begin{split}
        P(T=k \mid T \ge chain) &= \frac{P(T=k \cap T \ge chain)}{P(T \ge chain)} = 
        \left\{
        \begin{array}{ll}
             \frac{P(T = k)}{P(T \ge chain)} & k \ge chain\\
             0 & otherwise
        \end{array}
        \right.
    \end{split}    
\end{equation}

We can now introduce the random variable $Z$, representing the blocks that are not part of the chain:
\begin{equation}
    Z[j,m] \triangleq T[j,m] - chain[j,m].
\end{equation}
The probability of additional blocks forming part of the adversarial chain is bounded by $P(Z = k \mid chain) = P(T = k + chain \mid T \ge chain)$, given the underlying worst-case assumption that every block outside of \textit{chain} helps the adversarial chain.

\subsection{Step 2: Malicious blocks}

We have that $T[i] = X_f[i] + X_h[i]$, where $X_f$ and $X_h$ are the previously defined variables representing the number of malicious and honest blocks in round $i$. Because malicious blocks $X_f$ may appear on both \textit{chain} and the adversarial chain, we are interested in the joint distribution of the variables $(X_f[j,m], Z[j,m] \mid chain[j,m])$, which we can calculate as 
\begin{equation}\label{eq:bz}
    P(X_f,Z \mid chain) = P(X_f \mid Z, chain) \cdot P(Z \mid chain).
\end{equation}
 As $T = Z + chain$ and $z>0$, we can rewrite \eqref{eq:bz} as
\begin{equation}\label{eq:bz1}
\begin{aligned}
    P(X_f=b, Z=z \mid chain) &= P(X_f=b, T=z+\textit{chain} \mid chain)\\
     &= P (X_f=b \mid T=z+chain, chain) \cdot P(T=z+chain \mid chain).
\end{aligned}
\end{equation}

For the purpose of the analysis, we assume that, given the total number of blocks in a round, the number of malicious blocks is independent of the observed blocks in the chain, that is $P(X_f \mid T, chain) = P(X_f \mid T)$. We therefore rewrite \eqref{eq:bz1} as
\begin{equation}
    P(X_f=b, Z=z \mid chain) =
     P (X_f=b \mid T=z+chain) \cdot P(T=z+chain \mid T \ge chain).
\end{equation}

\subsection{Step 3: Error probability}

We define a new random variable $BpZ$, which provides an upper bound on the adversarial chain by conservatively assuming the sum of all outside blocks and all malicious blocks, resulting in potential double counting (that favors the adversary):
\begin{equation} 
    BpZ[j,m] \triangleq X_f[j,m] + Z[j,m]
\end{equation}

\begin{equation}
\begin{aligned}
        P(BpZ&=k \mid chain) = \sum_{X_f+Z=k} P(X_f, Z \mid chain) \\
        &= \sum_{z=0}^k\left(P(T=z+chain \mid T \ge chain) \vphantom{\sum_{b=k-z}^k} \right.  \left. \cdot \sum_{b=k-z}^k P(X_f = b \mid T=z+chain)\right).
\end{aligned}
\end{equation}

The derivation of the upper bound distributions for $L$ and $B$ follows the same formula as defined in \cref{sec:analysis}, but replacing the Poisson distribution with the distribution of $BpZ$. In particular, the distributions used in the distant past derivation for each $i$-rounds before~\targetRound{} in 
\begin{equation}
    L_i' \sim \texttt{Pois}\left(\sum_{j=s-i}^{s} f \cdot e\right)
    \tag{\ref{eq:L}}
\end{equation}
are replaced by
\begin{equation}\label{eq:Li_actor}
    L_i' \sim BpZ \quad \Longleftrightarrow \quad P(L_i'=k) = P(BpZ=k \mid chain),
    \tag{\ref{eq:L}'}
\end{equation}
and the recent past distribution in
\begin{equation}
    B \sim \texttt{Pois}\left(\sum_{i=s+1}^{i=c} f \cdot e\right)
    \tag{\ref{eq:B}}
\end{equation}
is replaced by
\begin{equation}\label{eq:B_actor}
    B \sim BpZ \quad \Longleftrightarrow \quad P(B=k) = P(BpZ=k \mid chain).
    \tag{\ref{eq:B}'}
\end{equation}

The rest of the calculations for~$L$ and~$B$ continue according to the derivations in \cref{sec:distantPast,sec:recentPast} respectively.
In addition, the calculation of $M$ and the final determination of the error probability remain unchanged.

\section{Implementation}
\label{sec:implementation}

We implemented the exact algorithms defined by the formulas in \cref{sec:analysis,sec:smart-contract} as a set of Python functions, leveraging NumPy and SciPy for probability computations. The code is open source, released under Apache 2.0 and MIT, and is available on GitHub~\cite{ecfc-github}. The repository also includes all the simulation and real-world traces used in this paper, as well as the code to process it and generate the figures.

The implementation, which we name the \textit{Finality Calculator}, can compute finality both from the node's perspective and from the smart contract's perspective and serves a dual purpose: in addition to enabling the validation and evaluation of the algorithms, it can be leveraged by any Filecoin user to determine the error probability for a given block based on a chain trace. However, this code has not been optimized for performance and does not currently integrate with client software to automatically fetch said traces. Optimized implementations are available in both Lotus\footnote{https://github.com/filecoin-project/lotus/pull/13547} and Forest\footnote{https://github.com/ChainSafe/forest/pull/6785}, the most widely used Filecoin clients. These production implementations are fully integrated with their respective clients, removing the need to manually supply chain traces and enabling real-time finality estimation for network participants.

The algorithms, as specified, include several operations over an unbounded range of the advantage $k$. In the implementation, these were truncated by default to a maximum $max\_k_{L,B}=400$ and $max\_k_M=100$, paired with an early stop condition when the probability of an event attains $10^{-25}$. Moreover, intermediate searches over neighboring rounds were also limited to $max\_i_{L} = 25$ and $max\_i_{M} = 100$. Further relaxing these parameters yields no substantive increase in accuracy for settlement times up to 100 rounds, whereas restricting them further is unnecessary due to the early stop condition.

\section{Simulations}
\label{sec:simulations}

We conduct a first empirical study of our algorithm in simulation. Recall that the Filecoin network is set to produce, on expectation, 5 blocks per round. We introduce the notion of \textit{chain fullness} $\alpha$ to represent the ratio of average blocks per tipset in the \lhChain{} to this target number. We generate synthetic chain traces using a Poisson distribution whose parameter $\lambda=\alpha \cdot 5$ is the expected number of blocks per tipset. For a perfectly healthy chain with $\alpha=1.00$, the average number of blocks in a tipset is 5 -- the target adopted as a network parameter. For each value of $\alpha$, we generate 7 independent traces. Specifically, we run experiments for $\alpha$ ranging from $0.80$ to $1.00$ (4 to 5 blocks per tipset on average) and generate 40,000 rounds of chain history for each run. The chain traces are then fed to the calculator using different settlement times (in rounds). Except where otherwise noted, we use the default calculator parameters in \cref{sec:implementation} and a Byzantine fraction $f = 0.3$ (a system-level assumption).

\Cref{fig:simulation_trends} shows the results of running the calculator on a simulated chain in typical fullness — $\alpha=0.96$, the 24-hour average value observed on the Filecoin network at the time of writing (as retrieved from Filscan~\cite{filscan}), and a fairly typical value. On (a), we present the results for the node case, where we see that the error probability drops exponentially with the increase in settlement time, before settling around $10^{-25}$ due to the early stop condition in the implementation. Notably, an error probability of $10^{-10}$---corresponding to a once-in-10,000-years event---is attained in fewer than 30 rounds, already substantially smaller than the $2^{-30} \approx 10^{-9}$ targeted by the 900-round soft finality parameter in Filecoin. Moving to plot (b), we see that the on-chain algorithm requires approximately 60 rounds to attain the same certainty that the node attains in less than 30 rounds, illustrating the limitations imposed by the limited information available in the on-chain case.


\begin{figure}
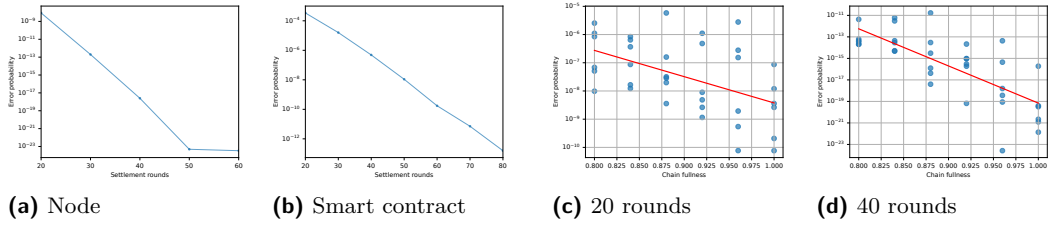

    \centering
    \begin{subfigure}[b]{0.24\textwidth}
        \centering
        \includesvg[inkscapelatex=false,width=\textwidth]{figures/simulation/trend_validator.svg}
        \caption{Node}
        \label{fig:trend_validator}
    \end{subfigure}
    \begin{subfigure}[b]{0.24\textwidth}
        \centering
        \includesvg[inkscapelatex=false,width=\textwidth]{figures/simulation/trend_actor.svg}
        \caption{Smart contract}
        \label{fig:trend_actor}
    \end{subfigure}
    \hfill
    \begin{subfigure}[b]{0.24\textwidth}
        \centering
        \includesvg[inkscapelatex=false,width=\textwidth]{figures/simulation/scatter_validator_20_sample.svg}
        \caption{20 rounds}
        \label{fig:scatter_validator_20}
    \end{subfigure}
    \begin{subfigure}[b]{0.24\textwidth}
        \centering
        \includesvg[inkscapelatex=false,width=\textwidth]{figures/simulation/scatter_validator_40_sample.svg}
        \caption{40 rounds}
        \label{fig:scatter_validator_40}
    \end{subfigure}    
    \caption{(a) and (b): Error probabilities for a range of settlement times, from the perspective of a node and of a smart contract, respectively, using simulation data for $\alpha=0.96$. The plot shows the median results over $7$ runs. Note that the $y$ axes use a logarithmic scale with different ranges. (c) and (d): Error probabilities from the perspective of a node, after a settlement time of 20 and 40 rounds, respectively, and for different chain fullness conditions. Each point represents a single calculator output, and the linear trend line is plotted in red.}
    \label{fig:simulation_trends}        
\end{figure}

We zoom in on the node case in plots (c) and (d), which presents the error probabilities for different levels of chain fullness, with the trend line in red. Non-full chains, as expected, lead to a much higher error probability and make it impossible to attain the same level of certainty without increasing the settlement time. Nevertheless, even in cases of far-from-full chains, the $10^{-10}$ threshold is crossed in under 40 rounds. 


Lastly, \cref{fig:simulation_details} shows the computed error probability over single runs of simulated full and non-full chains. While the smart contract and node error probabilities are correlated, and the node calculator always attains higher certainty, the two algorithms operate on somewhat different data and, therefore, their outputs are not simply scaled versions of the same curve.
\begin{figure}
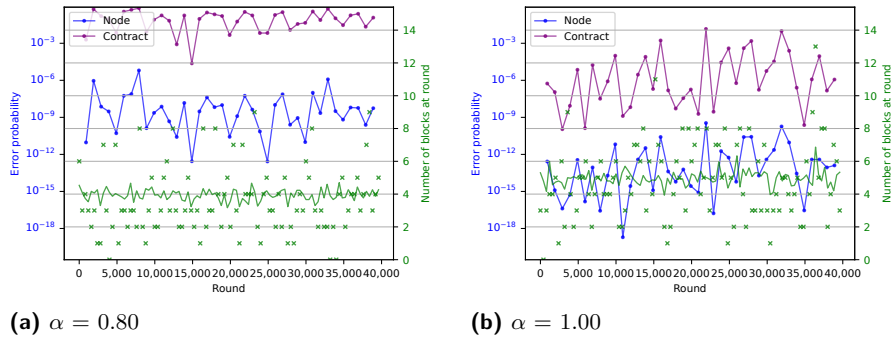

    \centering
    \begin{subfigure}[b]{0.40\textwidth}
        \centering
        \includesvg[inkscapelatex=false,width=\textwidth]{figures/simulation/80_0_30.svg}
        \caption{$\alpha$ = 0.80}
        \label{fig:error_80}
    \end{subfigure}
    \quad
    \begin{subfigure}[b]{0.40\textwidth}
        \centering
        \includesvg[inkscapelatex=false,width=\textwidth]{figures/simulation/100_0_30.svg}
        \caption{$\alpha$ = 1.00}
        \label{fig:error_100}
    \end{subfigure}
    \caption{Error probabilities from the perspective of a node and of a smart contract, after a settlement time of 30 rounds, for single runs with $\alpha = 0.80$ and $\alpha = 1.00$. The sampled tipset sizes are plotted in green and the line shows the 30-round moving average.}
    \label{fig:simulation_details}        
\end{figure}

\section{Evaluation}\label{sec:evaluation}

We further evaluated our work on real-world Filecoin chain traces obtained from an archival node. The traces were collected in 2023 and contain $80,000$ sequential round numbers and the count of blocks included in the corresponding round. They were fed into the same calculator implementation used in the simulation study.

\Cref{fig:evaluation_details} presents the results of running the calculator with settlement time $30$ on the aforementioned traces. The plot on the left shows the tipset sizes under normal operating conditions, and the resulting relatively low and stable error probabilities. The plot on the right covers a period of normal operation, with typical chain fullness; then, approaching round $2,740,000$, an incident causes the chain to go through several days of lower block production, during which the finality estimates correspondingly adapt. This captures the advantage of the finality calculator in a real-world setting: it enables users to target a desired level of safety and attain it in the shortest possible time under noraml operation, while graciously but rigorously retaining the target guarantees by falling back to longer finality delays in the presence of disturbances.
\begin{figure}
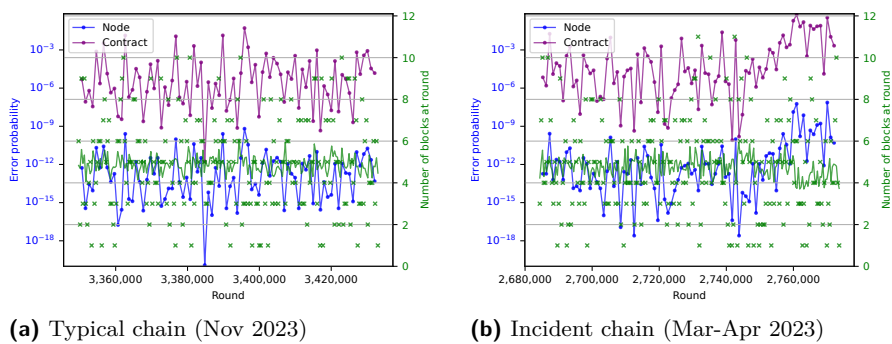

    \centering
    \begin{subfigure}[b]{0.40\textwidth}
        \centering
        \includesvg[inkscapelatex=false,width=\textwidth]{figures/evaluation/november_30.svg}
        \caption{Typical chain (Nov 2023)}
        \label{fig:plot_eval_nov}
    \end{subfigure}
    \quad
    \begin{subfigure}[b]{0.40\textwidth}
        \centering
        \includesvg[inkscapelatex=false,width=\textwidth]{figures/evaluation/march_30.svg}
        \caption{Incident chain (Mar-Apr 2023)}
        \label{fig:plot_eval_mar}
    \end{subfigure}
    \caption{Error probabilities from the perspective of a node and of a smart contract, after a settlement time of 30 rounds, for two real-world chain traces. The sampled tipset sizes are plotted in green and the line shows the 30-round moving average.}
    \label{fig:evaluation_details}        
\end{figure}

\section{Discussion}\label{sec:conclusions}
This research is the first quantitative analysis of Filecoin’s finality metrics, combining theoretical derivations with empirical analysis through simulations and real data. Our findings significantly advance the understanding of finality, showing that the desired error probability of~$2^{-30}$ is typically achieved within just 30 rounds. This marks a substantial improvement over the previously assumed necessity of 900 rounds, offering users faster settlement times and thus enhancing the user experience.

A key conceptual advance in this study is the shift from static, worst-case analyses to a dynamic approach that utilizes real-time chain data to inform current error probability calculations. This method refines the accuracy of predictions and directly impacts practical applications by providing more reliable metrics for network participants.


These insights open up several avenues for future research, notably in developing tighter bounds for on-chain finality, which could further enhance the user experience. This study lays a foundational step towards understanding and improving the mechanisms of finality within Filecoin and the broader class of layered DAG-structured blockchains. The implications of this work are expected to inform both theoretical advancements and immediate practical improvements in blockchain networks. 

A portion of this work formed the basis for a merged Filecoin standards proposal~\cite{frc0089}, which has since been implemented in both Lotus and Forest, the two leading Filecoin clients, enabling faster transaction finality in practice and demonstrating the real-world impact of this work.


\bibliography{references}

\clearpage
\appendix

\section{Illustrations of time spans}
\label{app:diagrams}

\begin{figure}[htb]
    \centering
\scalebox{0.45}{\begin{tikzpicture}[block/.style={draw, rectangle, minimum size=0.2cm, align=center, fill=gray!30},
                    tipset/.style={draw, ellipse, inner sep=2pt, fit=#1, thick},
                    line/.style={draw, thick, ->},
                    orphan/.style={draw, rectangle, minimum size=0.2cm, align=center, fill=red!50},
                    tipsetalt/.style={draw, ellipse, inner sep=2pt, fit=#1, thick, dashed, red},
                    tipsethighlight/.style={draw, ellipse, inner sep=2pt, fit=#1, very thick},
                    chainline/.style={draw, thick, dashed, ->, red},
                    compcloud/.style={draw, cloud, cloud puffs=11, cloud ignores aspect, inner sep=-4pt, fit=#1, very thick, red}]

    
    \node[block] at (0,0) (s1) {};
    \node[block, below=0.1cm of s1] (s2) {};
    \node[block, below=0.1cm of s2] (s3) {};
    \node[block, below=0.1cm of s3] (s4) {};
    \node[tipsethighlight={(s1) (s2) (s3) (s4)}, label=above: $s$] (ts) {};
    
    \node[block, right=0.6cm of s1] (b11) {};
    \node[block, below=0.1cm of b11] (b12) {};
    \node[block, below=0.1cm of b12] (b13) {};
    \node[block, below=0.1cm of b13] (b14) {};
    \node[block, below=0.1cm of b14] (b15) {};
    \node[tipset={(b11) (b12) (b13) (b14) (b15)}] (b1) {};

    \node[block, right=0.6cm of b11] (b21) {};
    \node[block, below=0.1cm of b21] (b22) {};
    \node[block, below=0.1cm of b22] (b23) {};
    \node[block, below=0.1cm of b23] (b24) {};
    \node[tipset={(b21) (b22) (b23) (b24)}] (b2) {};

    \node[block, right=0.6cm of b21] (b31) {};
    \node[block, below=0.1cm of b31] (b32) {};
    \node[block, below=0.1cm of b32] (b33) {};
    \node[block, below=0.1cm of b33] (b34) {};
    \node[block, below=0.1cm of b34] (b35) {};
    \node[block, below=0.1cm of b35] (b36) {};
    \node[tipset={(b31) (b32) (b33) (b34) (b35) (b36)}] (b3) {};

    \node[block, right=0.6cm of b31] (b41) {};
    \node[block, below=0.1cm of b41] (b42) {};
    \node[block, below=0.1cm of b42] (b43) {};
    \node[block, below=0.1cm of b43] (b44) {};
    \node[block, below=0.1cm of b44] (b45) {};
    \node[tipset={(b41) (b42) (b43) (b44) (b45)}] (b4) {};

    \node[block, right=0.6cm of b41] (b51) {};
    \node[block, below=0.1cm of b51] (b52) {};
    \node[block, below=0.1cm of b52] (b53) {};
    \node[block, below=0.1cm of b53] (b54) {};
    \node[tipset={(b51) (b52) (b53) (b54)}] (b5) {};
    
    \node[block, right=0.6cm of b51] (c1) {};
    \node[block, below=0.1cm of c1] (c2) {};
    \node[block, below=0.1cm of c2] (c3) {};
    \node[block, below=0.1cm of c3] (c4) {};
    \node[block, below=0.1cm of c4] (c5) {};
    \node[tipsethighlight={(c1) (c2) (c3) (c4) (c5)}, label=above: $c$] (tc) {};

    \node[block, left=0.6cm of s1] (bp11) {};
    \node[block, below=0.1cm of bp11] (bp12) {};
    \node[block, below=0.1cm of bp12] (bp13) {};
    \node[tipset={(bp11) (bp12) (bp13)}] (bp1) {};

    \node[block, left=0.6cm of bp11] (bp21) {};
    \node[block, below=0.1cm of bp21] (bp22) {};
    \node[block, below=0.1cm of bp22] (bp23) {};
    \node[block, below=0.1cm of bp23] (bp24) {};
    \node[block, below=0.1cm of bp24] (bp25) {};
    \node[tipset={(bp21) (bp22) (bp23) (bp24) (bp25)}] (bp2) {};

    \node[block, left=0.6cm of bp21] (bp31) {};
    \node[block, below=0.1cm of bp31] (bp32) {};
    \node[block, below=0.1cm of bp32] (bp33) {};
    \node[block, below=0.1cm of bp33] (bp34) {};
    \node[block, below=0.1cm of bp34] (bp35) {};
    \node[tipset={(bp31) (bp32) (bp33) (bp34) (bp35)}] (bp3) {};

    \node[block, left=0.6cm of bp31] (bp41) {};
    \node[block, below=0.1cm of bp41] (bp42) {};
    \node[block, below=0.1cm of bp42] (bp43) {};
    \node[block, below=0.1cm of bp43] (bp44) {};
    \node[block, below=0.1cm of bp44] (bp45) {};
    \node[tipset={(bp41) (bp42) (bp43) (bp44) (bp45)}] (bp4) {};

    \node[block, left=0.6cm of bp41] (bp51) {};
    \node[block, below=0.1cm of bp51] (bp52) {};
    \node[block, below=0.1cm of bp52] (bp53) {};
    \node[block, below=0.1cm of bp53] (bp54) {};
    \node[block, below=0.1cm of bp54] (bp55) {};
    \node[tipset={(bp51) (bp52) (bp53) (bp54) (bp55)}] (bp5) {};

    \node [left=0.6cm of bp5] (dots) {\ldots};

    \node[orphan, above=1.9cm of bp31] (f11) {};
    \node[orphan, above=0.1cm of f11] (f12) {};
    \node[orphan, above=0.1cm of f12] (f13) {};
    \node[orphan, above=0.1cm of f13] (f14) {};
    \node[orphan, above=0.1cm of f14] (f15) {};
    \node[orphan, above=0.1cm of f15] (f16) {};
    \node[orphan, above=0.1cm of f16] (f17) {};
    \node[tipsetalt={(f11) (f12) (f13) (f14) (f15) (f16) (f17)}] (f1) {};

    \node[orphan, right=0.6cm of f11] (f21) {};
    \node[orphan, above=0.1cm of f21] (f22) {};
    \node[orphan, above=0.1cm of f22] (f23) {};
    \node[orphan, above=0.1cm of f23] (f24) {};
    \node[tipsetalt={(f21) (f22) (f23) (f24)}] (f2) {};

    \node[orphan, right=0.6cm of f21] (f31) {};
    \node[orphan, above=0.1cm of f31] (f32) {};
    \node[orphan, above=0.1cm of f32] (f33) {};
    \node[orphan, above=0.1cm of f33] (f34) {};
    \node[orphan, above=0.1cm of f34] (f35) {};
    \node[tipsetalt={(f31) (f32) (f33) (f34) (f35)}] (f3) {};
    
    \draw[line] (bp5) -- (dots);
    \draw[line] (bp4) -- (bp5);
    \draw[line] (bp3) -- (bp4);
    \draw[line] (bp2) -- (bp3);
    \draw[line] (bp1) -- (bp2);
    \draw[line] (ts) -- (bp1);
    \draw[line] (b1) -- (ts);
    \draw[line] (b2) -- (b1);
    \draw[line] (b3) -- (b2);
    \draw[line] (b4) -- (b3);
    \draw[line] (b5) -- (b4);
    \draw[line] (tc) -- (b5);
    
    \draw[chainline] (f1) -- (bp4);
    \draw[chainline] (f2) -- (f1);
    \draw[chainline] (f3) -- (f2);
  
    \node[compcloud={(f1) (f2) (f3)}, label={[text=red]above:$L=16-13=3$}] (cloud1) {};
    
    \draw [decorate,decoration={brace,amplitude=10pt,mirror,raise=5pt}, draw=green!40!black]
    ($(ts.south west) - (0,1)$) -- ($(tc.south east) - (0,0.7)$) 
    node [pos=0.5,anchor=north,yshift=-0.55cm, text=green!40!black] {good addition$=33$};
\end{tikzpicture}}
\caption{A diagram illustrating the meaning of~$L$ when accounting for the distant past.
}
    \label{fig:distantPast}
\end{figure}

\begin{figure}[htb]
    \centering
\scalebox{0.45}{\begin{tikzpicture}[block/.style={draw, rectangle, minimum size=0.2cm, align=center, fill=gray!30},
                    tipset/.style={draw, ellipse, inner sep=2pt, fit=#1, thick},
                    line/.style={draw, thick, ->},
                    orphan/.style={draw, rectangle, minimum size=0.2cm, align=center, fill=red!50},
                    tipsetalt/.style={draw, ellipse, inner sep=2pt, fit=#1, thick, dashed, red},
                    tipsethighlight/.style={draw, ellipse, inner sep=2pt, fit=#1, very thick},
                    chainline/.style={draw, thick, dashed, ->, red},
                    compcloud/.style={draw, cloud, cloud puffs=11, cloud ignores aspect, inner sep=-4pt, fit=#1, very thick, red}]

    
    \node[block] at (0,0) (s1) {};
    \node[block, below=0.1cm of s1] (s2) {};
    \node[block, below=0.1cm of s2] (s3) {};
    \node[block, below=0.1cm of s3] (s4) {};
    \node[tipsethighlight={(s1) (s2) (s3) (s4)}, label=above: $s$] (ts) {};
    
    \node[block, right=0.6cm of s1] (b11) {};
    \node[block, below=0.1cm of b11] (b12) {};
    \node[block, below=0.1cm of b12] (b13) {};
    \node[block, below=0.1cm of b13] (b14) {};
    \node[block, below=0.1cm of b14] (b15) {};
    \node[tipset={(b11) (b12) (b13) (b14) (b15)}] (b1) {};

    \node[block, right=0.6cm of b11] (b21) {};
    \node[block, below=0.1cm of b21] (b22) {};
    \node[block, below=0.1cm of b22] (b23) {};
    \node[block, below=0.1cm of b23] (b24) {};
    \node[tipset={(b21) (b22) (b23) (b24)}] (b2) {};

    \node[block, right=0.6cm of b21] (b31) {};
    \node[block, below=0.1cm of b31] (b32) {};
    \node[block, below=0.1cm of b32] (b33) {};
    \node[block, below=0.1cm of b33] (b34) {};
    \node[block, below=0.1cm of b34] (b35) {};
    \node[block, below=0.1cm of b35] (b36) {};
    \node[tipset={(b31) (b32) (b33) (b34) (b35) (b36)}] (b3) {};

    \node[block, right=0.6cm of b31] (b41) {};
    \node[block, below=0.1cm of b41] (b42) {};
    \node[block, below=0.1cm of b42] (b43) {};
    \node[block, below=0.1cm of b43] (b44) {};
    \node[block, below=0.1cm of b44] (b45) {};
    \node[tipset={(b41) (b42) (b43) (b44) (b45)}] (b4) {};

    \node[block, right=0.6cm of b41] (b51) {};
    \node[block, below=0.1cm of b51] (b52) {};
    \node[block, below=0.1cm of b52] (b53) {};
    \node[block, below=0.1cm of b53] (b54) {};
    \node[tipset={(b51) (b52) (b53) (b54)}] (b5) {};
    
    \node[block, right=0.6cm of b51] (c1) {};
    \node[block, below=0.1cm of c1] (c2) {};
    \node[block, below=0.1cm of c2] (c3) {};
    \node[block, below=0.1cm of c3] (c4) {};
    \node[block, below=0.1cm of c4] (c5) {};
    \node[tipsethighlight={(c1) (c2) (c3) (c4) (c5)}, label=above: $c$] (tc) {};

    \node[block, left=0.6cm of s1] (bp11) {};
    \node[block, below=0.1cm of bp11] (bp12) {};
    \node[block, below=0.1cm of bp12] (bp13) {};
    \node[tipset={(bp11) (bp12) (bp13)}] (bp1) {};

    \node[block, left=0.6cm of bp11] (bp21) {};
    \node[block, below=0.1cm of bp21] (bp22) {};
    \node[block, below=0.1cm of bp22] (bp23) {};
    \node[block, below=0.1cm of bp23] (bp24) {};
    \node[block, below=0.1cm of bp24] (bp25) {};
    \node[tipset={(bp21) (bp22) (bp23) (bp24) (bp25)}] (bp2) {};

    \node[block, left=0.6cm of bp21] (bp31) {};
    \node[block, below=0.1cm of bp31] (bp32) {};
    \node[block, below=0.1cm of bp32] (bp33) {};
    \node[block, below=0.1cm of bp33] (bp34) {};
    \node[block, below=0.1cm of bp34] (bp35) {};
    \node[tipset={(bp31) (bp32) (bp33) (bp34) (bp35)}] (bp3) {};

    \node[block, left=0.6cm of bp31] (bp41) {};
    \node[block, below=0.1cm of bp41] (bp42) {};
    \node[block, below=0.1cm of bp42] (bp43) {};
    \node[block, below=0.1cm of bp43] (bp44) {};
    \node[block, below=0.1cm of bp44] (bp45) {};
    \node[tipset={(bp41) (bp42) (bp43) (bp44) (bp45)}] (bp4) {};

    \node[block, left=0.6cm of bp41] (bp51) {};
    \node[block, below=0.1cm of bp51] (bp52) {};
    \node[block, below=0.1cm of bp52] (bp53) {};
    \node[block, below=0.1cm of bp53] (bp54) {};
    \node[block, below=0.1cm of bp54] (bp55) {};
    \node[tipset={(bp51) (bp52) (bp53) (bp54) (bp55)}] (bp5) {};

    \node [left=0.6cm of bp5] (dots) {\ldots};

    \node[orphan, above=1.9cm of ts] (f11) {};
    \node[orphan, above=0.1cm of f11] (f12) {};
    \node[orphan, above=0.1cm of f12] (f13) {};
    \node[orphan, above=0.1cm of f13] (f14) {};
    \node[tipsetalt={(f11) (f12) (f13) (f14)}] (f1) {};

    \node[orphan, right=0.6cm of f11] (f21) {};
    \node[orphan, above=0.1cm of f21] (f22) {};
    \node[orphan, above=0.1cm of f22] (f23) {};
    \node[orphan, above=0.1cm of f23] (f24) {};
    \node[orphan, above=0.1cm of f24] (f25) {};
    \node[tipsetalt={(f21) (f22) (f23) (f24) (f25)}] (f2) {};

    \node[orphan, right=0.6cm of f21] (f31) {};
    \node[orphan, above=0.1cm of f31] (f32) {};
    \node[orphan, above=0.1cm of f32] (f33) {};
    \node[tipsetalt={(f31) (f32) (f33)}] (f3) {};

    \node[orphan, right=0.6cm of f31] (f41) {};
    \node[orphan, above=0.1cm of f41] (f42) {};
    \node[orphan, above=0.1cm of f42] (f43) {};
    \node[orphan, above=0.1cm of f43] (f44) {};
    \node[orphan, above=0.1cm of f44] (f45) {};
    \node[tipsetalt={(f41) (f42) (f43) (f44) (f45)}] (f4) {};
    
    \node[orphan, right=0.6cm of f41] (f51) {};
    \node[orphan, above=0.1cm of f51] (f52) {};
    \node[orphan, above=0.1cm of f52] (f53) {};
    \node[tipsetalt={(f51) (f52) (f53)}] (f5) {};

    \node[orphan, right=0.6cm of f51] (f61) {};
    \node[orphan, above=0.1cm of f61] (f62) {};
    \node[orphan, above=0.1cm of f62] (f63) {};
    \node[tipsetalt={(f61) (f62) (f63)}] (f6) {};
    
    \node[orphan, right=0.6cm of f61] (f71) {};
    \node[orphan, above=0.1cm of f71] (f72) {};
    \node[tipsetalt={(f71) (f72)}] (f7) {};
    
    \draw[line] (bp5) -- (dots);
    \draw[line] (bp4) -- (bp5);
    \draw[line] (bp3) -- (bp4);
    \draw[line] (bp2) -- (bp3);
    \draw[line] (bp1) -- (bp2);
    \draw[line] (ts) -- (bp1);
    \draw[line] (b1) -- (ts);
    \draw[line] (b2) -- (b1);
    \draw[line] (b3) -- (b2);
    \draw[line] (b4) -- (b3);
    \draw[line] (b5) -- (b4);
    \draw[line] (tc) -- (b5);
    
    \draw[chainline] (f1) -- (bp1);
    \draw[chainline] (f2) -- (f1);
    \draw[chainline] (f3) -- (f2);
    \draw[chainline] (f4) -- (f3);
    \draw[chainline] (f5) -- (f4);
    \draw[chainline] (f6) -- (f5);
    \draw[chainline] (f7) -- (f6);

    \node[compcloud={(f1) (f2) (f3) (f4) (f5) (f6) (f7)}, label={[text=red]above:$B=25$}] (cloud1) {};
    
    \draw [decorate,decoration={brace,amplitude=10pt,mirror,raise=5pt}, draw=green!40!black]
    ($(ts.south west) - (0,1)$) -- ($(tc.south east) - (0,0.7)$) 
    node [pos=0.5,anchor=north,yshift=-0.55cm, text=green!40!black] {good addition$=33$};
\end{tikzpicture}}
\caption{A diagram illustrating the meaning of~$B$ when accounting for the recent past.
}
    \label{fig:recentPast}
\end{figure}

\begin{figure}[htb]
    \centering
\scalebox{0.45}{\begin{tikzpicture}[block/.style={draw, rectangle, minimum size=0.2cm, align=center, fill=gray!30},
                    tipset/.style={draw, ellipse, inner sep=2pt, fit=#1, thick},
                    line/.style={draw, thick, ->},
                    orphan/.style={draw, rectangle, minimum size=0.2cm, align=center, fill=red!50},
                    futblock/.style={draw, rectangle, minimum size=0.2cm, align=center, fill=blue!50},
                    tipsetalt/.style={draw, ellipse, inner sep=2pt, fit=#1, thick, dashed, red},
                    tipsetfut/.style={draw, ellipse, inner sep=2pt, fit=#1, thick, dashed, blue},
                    tipsethighlight/.style={draw, ellipse, inner sep=2pt, fit=#1, very thick},
                    chainline/.style={draw, thick, dashed, ->, red},
                    futline/.style={draw, thick, dashed, ->, blue},
                    compcloud/.style={draw, cloud, cloud puffs=11, cloud ignores aspect, inner sep=-4pt, fit=#1, very thick, red},
                    futcloud/.style={draw, cloud, cloud puffs=11, cloud ignores aspect, inner sep=-4pt, fit=#1, very thick, blue}]

    
    \node[block] at (0,0) (s1) {};
    \node[block, below=0.1cm of s1] (s2) {};
    \node[block, below=0.1cm of s2] (s3) {};
    \node[block, below=0.1cm of s3] (s4) {};
    \node[tipsethighlight={(s1) (s2) (s3) (s4)}, label=above: $s$] (ts) {};
    
    \node[block, right=0.6cm of s1] (b11) {};
    \node[block, below=0.1cm of b11] (b12) {};
    \node[block, below=0.1cm of b12] (b13) {};
    \node[block, below=0.1cm of b13] (b14) {};
    \node[block, below=0.1cm of b14] (b15) {};
    \node[tipset={(b11) (b12) (b13) (b14) (b15)}] (b1) {};

    \node[block, right=0.6cm of b11] (b21) {};
    \node[block, below=0.1cm of b21] (b22) {};
    \node[block, below=0.1cm of b22] (b23) {};
    \node[block, below=0.1cm of b23] (b24) {};
    \node[tipset={(b21) (b22) (b23) (b24)}] (b2) {};

    \node[block, right=0.6cm of b21] (b31) {};
    \node[block, below=0.1cm of b31] (b32) {};
    \node[block, below=0.1cm of b32] (b33) {};
    \node[block, below=0.1cm of b33] (b34) {};
    \node[block, below=0.1cm of b34] (b35) {};
    \node[block, below=0.1cm of b35] (b36) {};
    \node[tipset={(b31) (b32) (b33) (b34) (b35) (b36)}] (b3) {};

    \node[block, right=0.6cm of b31] (b41) {};
    \node[block, below=0.1cm of b41] (b42) {};
    \node[block, below=0.1cm of b42] (b43) {};
    \node[block, below=0.1cm of b43] (b44) {};
    \node[block, below=0.1cm of b44] (b45) {};
    \node[tipset={(b41) (b42) (b43) (b44) (b45)}] (b4) {};

    \node[block, right=0.6cm of b41] (b51) {};
    \node[block, below=0.1cm of b51] (b52) {};
    \node[block, below=0.1cm of b52] (b53) {};
    \node[block, below=0.1cm of b53] (b54) {};
    \node[tipset={(b51) (b52) (b53) (b54)}] (b5) {};
    
    \node[block, right=0.6cm of b51] (c1) {};
    \node[block, below=0.1cm of c1] (c2) {};
    \node[block, below=0.1cm of c2] (c3) {};
    \node[block, below=0.1cm of c3] (c4) {};
    \node[block, below=0.1cm of c4] (c5) {};
    \node[tipsethighlight={(c1) (c2) (c3) (c4) (c5)}, label=above: $c$] (tc) {};

    \node[futblock, right=0.6cm of c1] (bp11) {};
    \node[futblock, below=0.1cm of bp11] (bp12) {};
    \node[futblock, below=0.1cm of bp12] (bp13) {};
    \node[tipsetfut={(bp11) (bp12) (bp13)}] (bp1) {};

    \node[futblock, right=0.6cm of bp11] (bp21) {};
    \node[futblock, below=0.1cm of bp21] (bp22) {};
    \node[futblock, below=0.1cm of bp22] (bp23) {};
    \node[futblock, below=0.1cm of bp23] (bp24) {};
    \node[tipsetfut={(bp21) (bp22) (bp23) (bp24)}] (bp2) {};

    \node[futblock, right=0.6cm of bp21] (bp31) {};
    \node[futblock, below=0.1cm of bp31] (bp32) {};
    \node[futblock, below=0.1cm of bp32] (bp33) {};
    \node[tipsetfut={(bp31) (bp32) (bp33)}] (bp3) {};

    \node[futblock, right=0.6cm of bp31] (bp41) {};
    \node[futblock, below=0.1cm of bp41] (bp42) {};
    \node[tipsetfut={(bp41) (bp42)}] (bp4) {};

    \node[futblock, right=0.6cm of bp41] (bp51) {};
    \node[futblock, below=0.1cm of bp51] (bp52) {};
    \node[futblock, below=0.1cm of bp52] (bp53) {};
    \node[tipsetfut={(bp51) (bp52) (bp53)}] (bp5) {};

    \node [left=0.6cm of ts] (dots) {\ldots};

    \node [right=0.6cm of bp5] (fut_dots) {\ldots};

    \node[orphan, above=1.9cm of bp11] (f11) {};
    \node[orphan, above=0.1cm of f11] (f12) {};
    \node[tipsetalt={(f11) (f12)}] (f1) {};
    
    \node [left=0.6cm of f1, text=red] (fut_past_dots) {\ldots};

    \node[orphan, right=0.6cm of f11] (f21) {};
    \node[orphan, above=0.1cm of f21] (f22) {};
    \node[orphan, above=0.1cm of f22] (f23) {};
    \node[tipsetalt={(f21) (f22) (f23)}] (f2) {};

    \node[orphan, right=0.6cm of f21] (f31) {};
    \node[orphan, above=0.1cm of f31] (f32) {};
    \node[orphan, above=0.1cm of f32] (f33) {};
    \node[orphan, above=0.1cm of f33] (f34) {};
    \node[orphan, above=0.1cm of f34] (f35) {};
    \node[tipsetalt={(f31) (f32) (f33) (f34) (f35)}] (f3) {};

    \node[orphan, right=0.6cm of f31] (f41) {};
    \node[orphan, above=0.1cm of f41] (f42) {};
    \node[orphan, above=0.1cm of f42] (f43) {};
    \node[tipsetalt={(f41) (f42) (f43)}] (f4) {};
    
    \node[orphan, right=0.6cm of f41] (f51) {};
    \node[orphan, above=0.1cm of f51] (f52) {};
    \node[orphan, above=0.1cm of f52] (f53) {};
    \node[orphan, above=0.1cm of f53] (f54) {};
    \node[tipsetalt={(f51) (f52) (f53) (f54)}] (f5) {};
    
    \draw[line] (ts) -- (dots);
    \draw[line] (b1) -- (ts);
    \draw[line] (b2) -- (b1);
    \draw[line] (b3) -- (b2);
    \draw[line] (b4) -- (b3);
    \draw[line] (b5) -- (b4);
    \draw[line] (tc) -- (b5);
    \draw[futline] (bp1) -- (tc);
    \draw[futline] (bp2) -- (bp1);
    \draw[futline] (bp3) -- (bp2);
    \draw[futline] (bp4) -- (bp3);
    \draw[futline] (bp5) -- (bp4);
    \draw[futline] (fut_dots) -- (bp5);
    
    \draw[chainline] (f1) -- (fut_past_dots);
    \draw[chainline] (f2) -- (f1);
    \draw[chainline] (f3) -- (f2);
    \draw[chainline] (f4) -- (f3);
    \draw[chainline] (f5) -- (f4);

    \node[compcloud={(f1) (f2) (f3) (f4) (f5)}, label={[text=red]above:$M=17-15=2$}] (compcloud) {};

    \node[futcloud={(bp1) (bp2) (bp3) (bp4) (bp5)}, label={[label distance=0.35cm, text=blue]below:slowed growth}] (futcloud) {};
    
    \draw [decorate,decoration={brace,amplitude=10pt,mirror,raise=5pt}, draw=green!40!black]
    ($(ts.south west) - (0,1)$) -- ($(tc.south east) - (0,0.7)$) 
    node [pos=0.5,anchor=north,yshift=-0.55cm, text=green!40!black] {good addition$=33$};
\end{tikzpicture}}
\caption{A diagram illustrating the meaning of~$M$ when accounting for the future.
}
    \label{fig:future}
\end{figure}

\end{document}